\documentclass[graybox]{svmult}
\usepackage{tikz}

\usepackage{type1cm}        
\usepackage{makeidx}         
\usepackage{graphicx}        
\usepackage{multicol}        
\usepackage[bottom]{footmisc}

\usepackage{newtxtext}       %
\usepackage{newtxmath}       

\makeindex             
\usepackage{algorithm}
\usepackage[algo2e]{algorithm2e} 
\theoremstyle{plain} 
\spnewtheorem{assumption}{Assumption}{\bfseries}{\itshape}
\spnewtheorem{myproblem}{Problem}{\bfseries}{\itshape}

\begin{document}

\title*{Minimizing cumulative infections in SIS epidemic models over networks via an edge deletion algorithm}
\titlerunning{Minimizing cumulative infections in SIS models over networks via EDA}
\author{Hoang Phi Dung$^*$ and Duong Khanh Ly}
\institute{Hoang Phi Dung \at Faculty of Fundamental Sciences, Posts and Telecommunications Institute of Technology, Hanoi, Vietnam, \email{dunghp@ptit.edu.vn}$^*$ \and Duong Khanh Ly \at Faculty of Information Technology, Posts and Telecommunications Institute of Technology, Hanoi, Vietnam, \email{LyDK.B24CE169@stu.ptit.edu.vn}}
%
%
\maketitle

\abstract{In this paper, we investigate the discrete SIS (Susceptible-Infected-Susceptible) models. We focus on minimizing epidemic spreading over networks by extending an existing edge deletion algorithm to the SIS model. To achieve this, we employ the mean-field approximation to linearize the network dynamics into a deterministic SIS model. We analytically demonstrate that the total number of infections is upper-bounded by a supermodular function, thereby ensuring the efficiency of the edge-deletion approach. To evaluate the proposed method, we conduct experiments on synthetic Erdős–Rényi networks and the real-world dataset collected from BBC Pandemic Haslemere app. Numerical simulations validate our theoretical results, confirming that both configurations converge to the stable, disease-free equilibrium.
\keywords{SIS epidemic model, complex networks, influence minimization, Markov chains, discrete optimization, edge deletion algorithms, mean-field approximation}}

\section{Introduction}

Network-based disease propagation models have now become a sharp tool for analyzing phenomena and simulating information diffusion processes on complex networks \cite{Kempe2003, Acemoglu2013, Chen2014, Pastor-Satorras2015, Shakarian2015, Nowzari2016, Silva2016}. Originating from the biological epidemic model of Kermack and McKendrick \cite{Kermack1927}, the SIR model and its variants are now effectively used to analyze, investigate, and evaluate the spread of information on complex networks. These models are applied in many fields, such as social networks \cite{Chen2014, Shakarian2015}, malware propagation \cite{Mieghem2009,Liu2022}, control systems \cite{Ahn2013, Pare2020}, etc. In the systems and control community or social networks, one of the main concerns is SIS epidemic models on non-trivial networks. These models have been proposed for discrete time \cite{Wang2003, Ahn2013, Pare2020, Gracy2021} and continuous time \cite{Ahn2013,Pare2020}. Research on the spread of information on platforms such as social networks, epidemiology, or malware propagation plays an essential role in communication, information security, and social sciences \cite{Chen2014, Kempe2015, Pham2019, Shakarian2015}. The influence minimization problem and the stability analysis of the models \cite{Kimura2008, Shakarian2015, Pham2019, Liu2022, Xie2023, Cho2024} are the central issues of controlling the spread of malicious information or malware. Recently, there have been very profound investigations into the infection volume in SIS, SIR models \cite{Mieghem2009, Nowzari2016, Pare2020, Yi2022, Youssef2011} and their modified models as SIRS, SDIR \cite{Ruhi2015, Khanh2026} on networks \cite{Mieghem2009, Nowzari2016, Pare2020, Yi2022, Youssef2011, Ruhi2015, Khanh2026}, see also the survey articles such as \cite{Pastor-Satorras2015, Nowzari2016}. Recently, there has been a trend in which many authors investigate an individual-based approach for the models \cite{Youssef2011, Cho2024, Khanh2026}.

In this paper, we consider the Markov chain SIS propagation model similar to the SIS models considered in \cite{Wang2003, Ahn2013, Pare2020, Gracy2021}. Instead of using common infection rates for each vertex as in \cite{Pare2020, Gracy2021}, we consider $\beta_{ij}(t)$ as random variables representing the event that node $i$ is infected from its infected neighbor $j$.

\vspace{1em}
\noindent\textit{Contributions}
 
The main problem is minimizing the scale of the epidemic within the framework of the Markov chain SIS propagation model. The main difficulty is the number of infections in the SIS models (with $2^N$ states) over networks is non-monotone over time. To solve the edge-deletion problem for mitigation under such uncertain conditions, we establish assumptions for the parameters to derive the sufficient conditions for the spectrum of the state transition matrix to be less than $1$ (Theorem \ref{convergence}). On this condition, we propose a monotone supermodular upper bound objective function (Theorem \ref{upper bound}), thereby transforming the NP-hard problem into an effectively solvable form. Furthermore, we propose an existing edge deletion algorithm (Algorithm \ref{GA}) for the SIS model to identify the optimal set of edges to be removed. However, due to the non-monotone nature of the SIS model, the increased infections at each time step are difficult to determine precisely. Therefore, we propose a new cumulative quantity to apply a supermodular function to the edge-deletion problem. The theoretical findings and algorithmic performance are validated through numerical simulations on both synthetic networks (the Erdos-Renyi Random networks) and a real-world network (the Haslemere dataset). The results demonstrate that our proposed method effectively reduces the infections (Figure \ref{exp-result}).

\vspace{1em}
\noindent\textit{Outline}

The layout of this paper is as follows. Section \ref{sec2} describes the proposed model and defines the core problem, while also providing the conditions regarding the rapid convergence of infection volume. In section \ref{sec3}, we construct a supermodular objective function and propose an edge-deletion algorithm. Finally, we validate the proposed algorithm through data simulations in Section \ref{sec4}.

\section{The information propagation with the SIS model over complex networks}\label{sec2}
\subsection{Model description}
\begin{center}
\begin{tikzpicture}
\node[draw, circle, minimum size=1cm, fill=green] at (-4,0) (S) {$S_i$};
\node[draw, circle, minimum size=1cm,fill=red] at (0,1) (I) {$I_i$};

  \draw[->] (I) -- (S) node[midway, below] {$\delta_i(t)$};

  \node[draw, circle, minimum size=0.5cm,fill=yellow] (I1) at (3,2) {$I_{j_1}$};
  \node[draw, circle, minimum size=0.5cm,fill=yellow] (I2) at (5,1) {$I_{j_2}$};
  \node[draw, circle, minimum size=0.5cm, fill=yellow] (I3) at (5,-1) {$I_{j_3}$};
   \node (Ndots) at (2,-2) {$\cdots$};

  \draw[->] (I1) -- (I) node[pos=0.4, above] {$\beta_{ij_1}(t)$};
  \draw[->] (I2) -- (I) node[pos=0.4, above] {$\beta_{ij_2}(t)$};
  \draw[->] (I3) -- (I) node[pos=0.5, below] {$\beta_{ij_3}(t)$};
  \draw[->] (Ndots) -- (I);
\end{tikzpicture}
\end{center}

Given an undirected graph $G=(V,E)$ with $N$ vertices. At time step $t$, each node $i\in V$ takes one of two states, Susceptible or Infected, corresponding to the variables $S_i(t), I_i(t)$. These variables take the value of $1$ if the node is in the respective state at time $t$ and 0 otherwise. Consequently, $S_i(t)+I_i(t)=1$. The edge connecting a neighboring node $j$ to node $i$ is associated with a random infection variable $\beta_{ij}(t)$ at time $t$. Furthermore, $\delta_i(t)$ represents the random variable for node $i$ reverting to the Susceptible state. Assume that $\beta_{ij}(t)$ and $\delta_i(t)$ are both i. i. d. We denote $[N]:=\{1, \dots, N\}$.  
\begin{assumption}\label{assp1}
    $\beta_{ij}(t)\geq0$ and $\delta_i(t)\geq0$, $\forall i\in[N],\forall t \geq 0$.
\end{assumption}
The recurrence relation characterizing the relationship between these two states can be formulated as 
\begin{align}
    &I_i(t+1)=I_i(t)+hS_i(t)\left(1-\prod_{j=1}^n(1-\beta_{ij}(t)I_i(t))\right)-h\delta_i(t)I_i(t),
\end{align}
or our discrete dynamical system is
\begin{align}
    &I_i(t+1)=(1-h\delta_i(t))I_i(t)+h(1-I_i(t))\left(1-\prod_{j=1}^n(1-\beta_{ij}(t)I_i(t))\right)
\end{align}
where $h\in(0,1]$ is the sampling parameter. Using the mean-field approximation, we have
\begin{align}
    E[I_i(t+1)]=&(1-hE[\delta_i(t)])E[I_i(t)]\\ \nonumber
    &+h(1-E[I_i(t)])\left(1-\prod_{j=1}^n(1-E[\beta_{ij}(t)]E[I_j(t)])\right).
\end{align}
Let the quantities $x_i(t), B_{ij}$ and $D_i$ correspond respectively to $E[I_i(t)]$, $E[\beta_{ij}(t)]$ and $E[\delta_i(t)]$ for all $(i,j)\in V$. Applying the approximation $(1-a)(1-b)\approx(1-a-b)$ for $a\ll 1, b\ll 1$, transforms $$\prod_{j=1}^n(1-E[\beta_{ij}(t)]E[I_j(t)])\approx1-\sum_{j=1}^NE[\beta_{ij}(t)]E[I_j(t)]$$ to yield the approximation for the discrete-time SIS model. Consequently, the equation can be reduced to the following form
\begin{align}\label{D-SIS-origin}
    &x_i(t+1)=(1-hD_i)x_i(t)+h(1-x_i(t))\sum_{j=1}^n B_{ij}x_j(t)
\end{align}

\begin{assumption}\label{assp2}
    $hD_i<1$ and $h\sum_{i\ne j}B_{ij}<1$, $\forall i\in[N]$.
\end{assumption}

Many authors researched the discrete SIS model \eqref{D-SIS}, for instance \cite{Pare2020, Gracy2021}. To represent the SIS model in matrix form, we define a column vector $\textbf{x}(t)$ with elements $x_i(t)$, a matrix $\textbf{B}$ with elements $B_{ij}$, and matrices $\textbf{D}$ and $\textbf{X}(t)$ as the diagonal matrices $diag(D_i)$ and $diag(x_i(t))$, respectively. Hence, \eqref{D-SIS-origin} becomes
\begin{align}\label{D-SIS}
    &\textbf{x}(t+1)=(I-h\textbf{D})\textbf{x}(t)+h(I-\textbf{X}(t))\textbf{B}\textbf{x}(t).
\end{align}
The model \eqref{D-SIS} is called the Deterministic SIS model (D-SIS for short).

In this paper, we study the cumulative number of infections $\textbf{x}(t)$ and the deviation in the number of infections. We denote $\textbf{x}^*$ with entries $x^*_i := \sup\limits_t (x_i^*(t))$. The following definition indicates the deviation of the infection prevalence from the initial infection prevalence on the network, see also \cite{Yi2022}. 
\begin{definition}
We say that the quantity $\sigma(P) := \|\textbf{x}^* - \textbf{x}(0)\|_1$ is the {\em increased number of infections} of the network.
\end{definition}
To calculate the accumulated variation of the infection amount, we define a new quantity as follows.
\begin{definition}
The quantity
\begin{equation}\label{cumulative}
    \sigma'(P) := \sum_{l=0}^{t-1} \|\mathbf{x}(l) - \mathbf{x}(l+1)\|_1
\end{equation}
is called the {\em cumulative infection variation} up to time $t$.
\end{definition}
\begin{problem}\label{problem}
    Given an undirected graph $G=(V,E)$, where each edge $(i,j)\in E$ has an infection rate $B_{ij}$ and each node $i\in V$ has a healing rate $D_i$, an initial state vector $\textbf{x}(0)\in [0,1]^N$, a candidate set $Q\subseteq E$ of size $q$ and a positive integer $k$ satisfying $0 < k \le q$. Find a subset $P^* \subseteq Q$ with $|P^*| \le k$ such that
    \[P^* \in \operatorname{argmin}_{P\subseteq Q, |P|=k} \sigma(P).\]
\end{problem}
\begin{remark} Finding an optimal solution for $P^*$ in Problem \ref{problem} is NP-hard. The proof is similar to \cite[Theorem 4.4]{Yi2022}.
\end{remark}
\subsection{Convergence analysis}
The following theorem gives a sufficient condition for the convergence of the model. Our result is different from \cite[Theorem 1]{Pare2020} because we use $B_{ij} = E[\beta_{ij}(t)]$ instead of $B_{ij} = \beta_ia_{ij}$ in \cite{Pare2020}. We need an assumption about the structure of the matrix $B$. A square matrix is called irreducible if it cannot be permuted to a block upper triangular matrix. 
\begin{assumption}\label{assp3}
    The matrix $\textbf{B}$ is irreducible.
\end{assumption}

From Equation (\ref{D-SIS}) and Assumption \ref{assp2} we obtain the inequality
\begin{equation}\label{ineq1.5}
    \textbf{x}(t+1)\leq(\textbf{I}-h\textbf{D}+h\textbf{B})\textbf{x}(t).
\end{equation}
Let $\mathbf{M}:= \mathbf{I} - h\mathbf{D} + h\mathbf{B}$ be the state transition matrix of the D-SIS. Then, we have the following important inequality.
\begin{align}\label{ineq1}
    \textbf{x}(t+1)\leq \textbf{M}\textbf{x}(t).
\end{align}
The following result is similar to \cite[Theorem 1]{Pare2020}, in here, by using the above inequality, we give another proof for it, see the Appendix.
\begin{theorem}\label{convergence}
    Suppose that the Assumptions \ref{assp1}, \ref{assp2}, \ref{assp3} hold for the D-SIS model \eqref{D-SIS}. If the spectral radius of $\textbf{M}$ satisfies $\rho(\textbf{M})<1$, then the network is asymptotically stable with the equilibrium point $\hat{\textbf{x}}=0$.
\end{theorem}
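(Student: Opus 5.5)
The plan is to compare the nonlinear D-SIS trajectory with the linear system driven by $\mathbf{M}$, using inequality \eqref{ineq1}.

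\textbf{Step 1: invariance of $[0,1]^N$.} First I would show by induction that $\mathbf{x}(t)\in[0,1]^N$ for all $t\ge 0$ whenever $\mathbf{x}(0)\in[0,1]^N$. Suppose $x_i(t)\in[0,1]$ for all $i$.
\begin{itemize}
\item Nonnegativity: by \eqref{D-SIS-origin}, $x_i(t+1)$ is the sum of $(1-hD_i)x_i(t)\ge 0$ and $h(1-x_i(t))\sum_j B_{ij}x_j(t)\ge 0$. Both are nonnegative by Assumptions \ref{assp1} and \ref{assp2}.
\item Upper bound: using $x_j(t)\le 1$ and Assumption \ref{assp2}, we get $h\sum_j B_{ij}x_j(t)\le 1$. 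Hence $x_i(t+1)\le (1-hD_i)x_i(t)+(1-x_i(t))\le 1$.
\end{itemize}
Invariance matters because inequality \eqref{ineq1} needs $\mathbf{I}-\mathbf{X}(t)$ to be a diagonal matrix with entries in $[0,1]$. Only then does $(\mathbf{I}-\mathbf{X}(t))\mathbf{B}\mathbf{x}(t)\le \mathbf{B}\mathbf{x}(t)$ hold componentwise.

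\textbf{Step 2: iterate the comparison.} The matrix $\mathbf{M}=\mathbf{I}-h\mathbf{D}+h\mathbf{B}$ is entrywise nonnegative, since $1-hD_i>0$ and $B_{ij}\ge 0$. Multiplication by $\mathbf{M}$ therefore preserves the componentwise order on $\mathbb{R}^N_{\ge0}$. Iterating \eqref{ineq1} gives
\[
0\le \mathbf{x}(t)\le \mathbf{M}^t\mathbf{x}(0)\quad\text{for all } t\ge 0.
\]
Since $\rho(\mathbf{M})<1$, we have $\mathbf{M}^t\to 0$, with a geometric rate given by Gelfand's formula. Hence $\|\mathbf{x}(t)\|\le \|\mathbf{M}^t\|\,\|\mathbf{x}(0)\|\to 0$ for every initial condition in $[0,1]^N$. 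This gives global (exponential) convergence to $\hat{\mathbf{x}}=0$ on the invariant domain.

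\textbf{Step 3: Lyapunov stability.} Convergence alone does not give asymptotic stability, so I would also check stability of $0$. The same bound does it: $\|\mathbf{x}(t)\|\le C\|\mathbf{x}(0)\|$ with $C=\sup_t\|\mathbf{M}^t\|<\infty$. Alternatively, by Perron--Frobenius (Assumption \ref{assp3}), there is a positive left eigenvector $w^\top$ of $\mathbf{M}$ with eigenvalue $\rho(\mathbf{M})$. Then $V(\mathbf{x})=w^\top\mathbf{x}$ is a linear Lyapunov function satisfying $V(\mathbf{x}(t+1))\le\rho(\mathbf{M})V(\mathbf{x}(t))$. This is where irreducibility enters, by making $V$ a genuine norm on the nonnegative orthant.

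\textbf{Expected obstacle.} The only delicate point is Step 1, namely that the trajectory stays in $[0,1]^N$. The comparison \eqref{ineq1} and the positivity of $\mathbf{M}$ are valid only there. This requires the second condition of Assumption \ref{assp2}, which I read as bounding the row sums $h\sum_j B_{ij}$. If that condition is meant to exclude diagonal terms, I would note that $B_{ii}=0$ since there are no self-loops, so the bound still applies.
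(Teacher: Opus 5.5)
Your proof is correct and follows essentially the same route as the paper's: compare the nonlinear trajectory with the linear system through \eqref{ineq1} and use $\rho(\mathbf{M})<1$. Your execution is more careful than the paper's. The paper takes norms before iterating, $\|\mathbf{x}(t)\|\le\|\mathbf{M}\|\,\|\mathbf{x}(t-1)\|$, then silently replaces the spectral norm $\|\mathbf{M}\|$ by $\rho(\mathbf{M})$ (but $\|\mathbf{M}\|$ can exceed $1$ when $\mathbf{B}$ is not symmetric), and it never checks that $\mathbf{x}(t)$ stays in $[0,1]^N$. Your invariance step, the componentwise iteration $\mathbf{x}(t)\le\mathbf{M}^t\mathbf{x}(0)$ with $\mathbf{M}^t\to 0$, and the Perron-weighted function $w^\top\mathbf{x}$ repair both points; the last also recovers exactly the rate $\rho(\mathbf{M})^t$ the paper asserts, in that weighted norm.
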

\section{Supermodularity for D-SIS model and the Edge Deletion Algorithm}\label{sec3}
Let $\mathbf{M}_{-P}:= \mathbf{I} - h\mathbf{D} + h\mathbf{B}_{-P}$ be the state transition matrix of the D-SIS model after removing the edge set $P \subseteq Q$. Then, we have the following important inequality.
\begin{align*}
    \textbf{x}(t+1)\leq(\textbf{M}_{-P})\textbf{x}(t).
\end{align*}
The following result aims to upper bound the quantities $\sigma(\cdot)$ and $\sigma'(\cdot)$ to avoid solving an NP-hard problem. The upper-bounded quantity will be supermodular and monotone, which allows the problem to be solvable in polynomial time.
\begin{theorem}\label{upper bound}
    Suppose that the Assumptions \ref{assp1}, \ref{assp2}, \ref{assp3} hold for the D-SIS model \eqref{D-SIS}. If $\rho(\textbf{M}_{-P})<1$, then the increased number of infections of the network satisfies  
    \begin{align}\label{Sigma mu}
    \sigma(P)\leq\sigma'(P)\leq\hat\sigma(P):=1^T(\textbf{M}_{-P}+h\textbf{D}-\textbf{I})(\textbf{I}-\textbf{M}_{-P})^{-1}\textbf{x}(0),
    \end{align}
    where $\sigma'(P)$ is defined by \eqref{cumulative}.
\end{theorem}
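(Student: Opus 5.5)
The plan is to prove the two inequalities separately, using three ingredients: monotonicity of the dynamics, the comparison inequality $\mathbf{x}(t+1)\le \mathbf{M}_{-P}\mathbf{x}(t)$, and a Neumann-series summation made possible by $\rho(\mathbf{M}_{-P})<1$. Throughout, $t$ in \eqref{cumulative} is taken large enough (or $t\to\infty$) that every supremum $x_i^*$ is attained or approached.

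First I would establish invariance: $\mathbf{x}(0)\in[0,1]^N$ implies $\mathbf{x}(t)\in[0,1]^N$ for all $t$. This follows from \eqref{D-SIS-origin}, Assumption \ref{assp1} and Assumption \ref{assp2}. Indeed, $x_i(t+1)$ is a convex-type combination of nonnegative terms, and it is at most $(1-hD_i)x_i+h(1-x_i)\le 1$ because $h\sum_j B_{ij}<1$. Assumption \ref{assp2} also makes $\mathbf{M}_{-P}$ entrywise nonnegative. Iterating the comparison inequality therefore gives $\mathbf{x}(l)\le \mathbf{M}_{-P}^l\mathbf{x}(0)$ for every $l$. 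Since $\rho(\mathbf{M}_{-P})<1$, the series $\sum_{l\ge 0}\mathbf{M}_{-P}^l$ converges to $(\mathbf{I}-\mathbf{M}_{-P})^{-1}$, and this inverse is nonnegative.

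For $\sigma(P)\le\sigma'(P)$, fix $i$ and a time $T$ with $x_i(T)$ arbitrarily close to $x_i^*$. Telescoping and the triangle inequality give
\[
x_i(T)-x_i(0)=\sum_{l=0}^{T-1}\bigl(x_i(l+1)-x_i(l)\bigr)\le\sum_{l=0}^{T-1}\bigl|x_i(l+1)-x_i(l)\bigr|.
\]
Summing over $i$ gives $\|\mathbf{x}^*-\mathbf{x}(0)\|_1\le\sigma'(P)$; the absolute value in $\sigma$ causes no trouble because $x_i^*\ge x_i(0)$.

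For $\sigma'(P)\le\hat\sigma(P)$, note that $\mathbf{M}_{-P}+h\mathbf{D}-\mathbf{I}=h\mathbf{B}_{-P}$. It therefore suffices to show
\[
\sum_l \mathbf{1}^T\bigl|\mathbf{x}(l+1)-\mathbf{x}(l)\bigr|\le h\,\mathbf{1}^T\mathbf{B}_{-P}\sum_l\mathbf{x}(l),
\]
and then use $\sum_l\mathbf{x}(l)\le(\mathbf{I}-\mathbf{M}_{-P})^{-1}\mathbf{x}(0)$ together with $\mathbf{B}_{-P}\ge0$.

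The per-step increment is $h(\mathbf{I}-\mathbf{X})\mathbf{B}_{-P}\mathbf{x}(l)-h\mathbf{D}\mathbf{x}(l)$. Its positive part is at most $h\mathbf{B}_{-P}\mathbf{x}(l)$ componentwise, since $0\le 1-x_i\le 1$. This directly bounds the upward variation, which is all that $\sigma(P)$ really needs.

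The main obstacle is the downward (healing) part of the absolute variation. The naive bound is $|\Delta x_i|\le h D_i x_i+h(\mathbf{B}_{-P}\mathbf{x})_i$, which produces $\mathbf{1}^T h(\mathbf{D}+\mathbf{B}_{-P})(\mathbf{I}-\mathbf{M}_{-P})^{-1}\mathbf{x}(0)$. That expression carries an extra $h\mathbf{D}$ term relative to $\hat\sigma$. My first attempt would be to split each increment into its positive and negative parts. The total decrease equals the total increase plus $\mathbf{1}^T(\mathbf{x}(0)-\mathbf{x}(t))$, so I would try to show that the decreases are absorbed, and then check whether $\sigma'$ must be read as the accumulated upward variation for the stated constant to hold. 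If the full absolute variation is intended, I expect the sharp constant to require this reinterpretation, or else an added term involving $h\mathbf{D}$. In either case the chain $\sigma\le\sigma'$ and the bound of the upward variation by $\hat\sigma$ go through as above.
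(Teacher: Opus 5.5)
Your proof of $\sigma(P)\le\sigma'(P)$ (telescoping, with $t$ large) is fine. Your positive-part bound $\Delta_i(l)^+\le h(\mathbf{B}_{-P}\mathbf{x}(l))_i$, combined with $\sum_l\mathbf{x}(l)\le\sum_l\mathbf{M}_{-P}^l\mathbf{x}(0)\le(\mathbf{I}-\mathbf{M}_{-P})^{-1}\mathbf{x}(0)$, rigorously gives $\sigma(P)\le\hat\sigma(P)$. This is the paper's route (comparison inequality plus Neumann series), with the paper's ``$\approx$'' replaced by a genuine inequality.

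What you do not establish is $\sigma'(P)\le\hat\sigma(P)$ for the full absolute variation \eqref{cumulative}. You should stop trying, because it is false, and your diagnosis of the healing term is exactly right. Take $N=2$, $P=\emptyset$, $D_1=D_2=d$ with $hd<1$, $B_{12}=B_{21}=b$ with $0<b<d$, and $\mathbf{x}(0)=(a,0)^T$.
\begin{itemize}
\item Then $\rho(\mathbf{M})=1-h(d-b)<1$. From $\mathbf{1}^T\mathbf{B}=b\mathbf{1}^T$ and $\mathbf{1}^T(\mathbf{I}-\mathbf{M})=h(d-b)\mathbf{1}^T$ one gets $\hat\sigma=ab/(d-b)$.
\item Already at $t=1$ one has $x_1(1)=(1-hd)a$ and $x_2(1)=hab$, hence $\sigma'=ha(d+b)$.
\item For $h=0.9$, $d=0.5$, $b=0.01$, $a=0.9$ (all three assumptions hold) this gives $\sigma'=0.4131$ versus $\hat\sigma\approx0.0184$.
\item Letting $t\to\infty$, $\sigma'\ge a-x_1(t)\to a$, while $\hat\sigma\to0$ as $b\to0$.
\end{itemize}
The paper's own proof fails at precisely this point. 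From the one-sided bound $\mathbf{x}(l+1)-\mathbf{x}(l)\le h\mathbf{B}_{-P}\mathbf{x}(l)$ it passes to \eqref{proof2.1}, a bound on $\|\mathbf{x}(l)-\mathbf{x}(l+1)\|_1$. That step discards decreases of size up to $hD_ix_i(l)$.

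Consequently the ``absorb the decreases'' step cannot succeed. The total decrease equals the total increase plus $\mathbf{1}^T(\mathbf{x}(0)-\mathbf{x}(t))$. The latter tends to $\mathbf{1}^T\mathbf{x}(0)$ because $\mathbf{0}\le\mathbf{x}(t)\le\mathbf{M}_{-P}^t\mathbf{x}(0)\to\mathbf{0}$, and $\hat\sigma$ does not control that quantity. Carried through, the identity gives $\sigma'(P)\le 2\hat\sigma(P)+\mathbf{1}^T\mathbf{x}(0)$. This is exactly your naive bound with the extra $h\mathbf{D}$ term, since $h(\mathbf{D}+\mathbf{B}_{-P})=(\mathbf{I}-\mathbf{M}_{-P})+2h\mathbf{B}_{-P}$.

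So commit to the corrected statement: $\sigma(P)\le\hat\sigma(P)$, and $\sigma(P)\le\sigma'(P)\le\mathbf{1}^T\mathbf{x}(0)+2\hat\sigma(P)$. The additive constant is independent of $P$, so $\hat\sigma$ remains a legitimate greedy surrogate.

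One minor point: in your invariance step the right chain is $x_i(t+1)\le(1-hD_i)x_i+(1-x_i)\le 1$, using $h\sum_jB_{ij}x_j\le h\sum_jB_{ij}<1$. Your intermediate expression $(1-hD_i)x_i+h(1-x_i)$ tacitly assumes $\sum_jB_{ij}x_j\le 1$, which Assumption \ref{assp2} does not give.
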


\begin{proposition}\label{supermodularity}
    The function $\hat\sigma(P)$ is non-increasing and supermodular with respect to the edge removal set $P \subseteq Q$.
\end{proposition}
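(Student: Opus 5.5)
The plan is to rewrite $\hat\sigma(P)$ as a nonnegative power series in the remaining infection matrix. Monotonicity and supermodularity should then both follow from counting weighted walks that avoid the deleted edges.

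\textbf{Step 1 (simplify $\hat\sigma$).} Since $\mathbf{M}_{-P}+h\mathbf{D}-\mathbf{I}=h\mathbf{B}_{-P}$ and $\mathbf{I}-\mathbf{M}_{-P}=h(\mathbf{D}-\mathbf{B}_{-P})$, we get
\begin{equation*}
\hat\sigma(P)=1^T\mathbf{B}_{-P}(\mathbf{D}-\mathbf{B}_{-P})^{-1}\mathbf{x}(0).
\end{equation*}
Using $\mathbf{B}_{-P}(\mathbf{D}-\mathbf{B}_{-P})^{-1}=\mathbf{D}(\mathbf{D}-\mathbf{B}_{-P})^{-1}-\mathbf{I}$, this becomes
\begin{equation*}
\hat\sigma(P)=1^T\mathbf{D}(\mathbf{D}-\mathbf{B}_{-P})^{-1}\mathbf{x}(0)-1^T\mathbf{x}(0).
\end{equation*}
The second term does not depend on $P$. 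It is therefore enough to study $f(P):=1^T\mathbf{D}(\mathbf{D}-\mathbf{B}_{-P})^{-1}\mathbf{x}(0)$.

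\textbf{Step 2 (Neumann series).} I assume $D_i>0$ for all $i$ and $\rho(\mathbf{M})<1$ for the undeleted network. Then $\mathbf{D}-\mathbf{B}$ is a nonsingular M-matrix, and $\mathbf{D}-\mathbf{B}=\mathbf{D}-\mathbf{B}$ is a regular splitting. Hence $\rho(\mathbf{D}^{-1}\mathbf{B})<1$.

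Since $0\le\mathbf{B}_{-P}\le\mathbf{B}$ entrywise, Perron--Frobenius monotonicity gives $\rho(\mathbf{D}^{-1}\mathbf{B}_{-P})\le\rho(\mathbf{D}^{-1}\mathbf{B})<1$ for every $P\subseteq Q$. This also gives $\rho(\mathbf{M}_{-P})<1$ for all $P$, as required in Theorem \ref{upper bound}. Therefore
\begin{equation*}
(\mathbf{D}-\mathbf{B}_{-P})^{-1}=\sum_{k\ge0}(\mathbf{D}^{-1}\mathbf{B}_{-P})^k\mathbf{D}^{-1},
\end{equation*}
so that
\begin{equation*}
f(P)=\sum_{k\ge0}\sum_{\gamma\in\Gamma_k(P)} w(\gamma),
\end{equation*}
where $\Gamma_k(P)$ is the set of walks of length $k$ in $G$ that use no edge of $P$. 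Each walk has weight $w(\gamma)=D_{i_0}\prod_{s}\big(B_{i_s i_{s+1}}/D_{i_s}\big)\,D_{i_k}^{-1}x_{i_k}(0)\ge0$ by Assumption \ref{assp1}. The weight does not depend on $P$; only the admissible set of walks does.

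\textbf{Step 3 (monotonicity and supermodularity).} If $P\subseteq R$, then $\Gamma_k(R)\subseteq\Gamma_k(P)$, and since all weights are nonnegative, $f(R)\le f(P)$. So $\hat\sigma$ is non-increasing.

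For supermodularity, take $P\subseteq R\subseteq Q$ and $e\in Q\setminus R$. The marginal decrease $f(P)-f(P\cup\{e\})$ equals the total weight of walks that avoid $P$ but traverse $e$. Likewise, $f(R)-f(R\cup\{e\})$ equals the total weight of walks that avoid $R$ and traverse $e$. The second family is contained in the first, so
\begin{equation*}
f(P)-f(P\cup\{e\})\ge f(R)-f(R\cup\{e\}),
\end{equation*}
which is supermodularity of $f$, and hence of $\hat\sigma$.

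\textbf{Main obstacle.} The delicate point is Step 2: justifying the Neumann expansion uniformly over all $P$. This needs the equivalence between $\rho(\mathbf{M}_{-P})<1$ and $\rho(\mathbf{D}^{-1}\mathbf{B}_{-P})<1$, via the M-matrix / regular-splitting theorem, together with positivity of $D_i$. One must also check that rearranging the infinite nonnegative sums in the walk decomposition is legitimate, which holds by absolute convergence.
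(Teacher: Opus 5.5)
Your proof is correct, and it takes a genuinely different route from the paper.

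\textbf{The paper's route.} The paper keeps $\hat\sigma$ in its product form $1^T(\mathbf{M}_{-P}+h\mathbf{D}-\mathbf{I})(\mathbf{I}-\mathbf{M}_{-P})^{-1}\mathbf{x}(0)$. It first shows entrywise diminishing returns of $(\mathbf{I}-\mathbf{M}_{-P})^{-1}$. To do this it writes a deletion as a Sherman--Morrison rank-one update $c\,e_ie_j^T$ with $c=hB_{ij}$, and compares the corrections for $P_1\subseteq P_2$ through an integral representation. It then asserts that a product of entrywise nonnegative, nonincreasing, supermodular matrix functions is again of that type.

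\textbf{What your route buys.} Your identity $\mathbf{B}_{-P}(\mathbf{D}-\mathbf{B}_{-P})^{-1}=\mathbf{D}(\mathbf{D}-\mathbf{B}_{-P})^{-1}-\mathbf{I}$ removes the product altogether. The walk expansion then turns monotonicity and supermodularity into plain inclusions of walk families, so you need neither the product lemma nor an inverse-update formula. It also handles deleting an undirected edge, which zeroes both $B_{ij}$ and $B_{ji}$, with no extra work. The paper's rank-one step would have to be applied twice for that.

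\textbf{What the paper's route buys.} It gives an explicit closed form for the marginal gain $\hat\sigma(P)-\hat\sigma(P\cup\{e\})$. That is what one would actually use to run the greedy algorithm efficiently, updating the inverse after each deletion.

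\textbf{Small simplifications.} First, you need not assume $D_i>0$. A nonsingular M-matrix has positive diagonal, so $D_i>B_{ii}\ge 0$ already follows from $\rho(\mathbf{M})<1$.

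Second, the step you call the main obstacle can be skipped. Assumptions~\ref{assp1} and~\ref{assp2} give $0\le\mathbf{M}_{-P}\le\mathbf{M}$ entrywise. Hence $(\mathbf{I}-\mathbf{M}_{-P})^{-1}=\sum_{k\ge0}\mathbf{M}_{-P}^k$ converges directly. The same walk argument then applies to $\hat\sigma$ as it stands, with the diagonal entries of $\mathbf{M}$ (which deletions do not change) serving as loop weights.

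Third, every set compared in the supermodularity inequality contains the smaller set $P$, and the spectral radius can only drop under further deletions. So your argument also works under the paper's weaker hypothesis, which asks for $\rho(\mathbf{M}_{-P})<1$ only for the smaller set.
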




It is computationally intractable to design an algorithm that solves all instances in polynomial time. Therefore, we employ a monotone supermodular upper bound $\hat\sigma(.)$ of the objective function as a surrogate for optimization.
\begin{algorithm}[H]
\caption{Greedy Algorithm (Edge Deletion Algorithm)}\label{GA}
    \textbf{Input:} A function $\hat\sigma,$ a network $G,$ initial states of nodes, a candidate set $Q,$ an integer $k$.
    
    \textbf{Output:} A set of $k$ edges $P\subseteq Q$.
    
    Initialize the set $P = \emptyset$;
    
    \For{$i = 1$ to $k$}{
        Compute $\hat\sigma(P\cup\{e\})$ for each $e\in Q\setminus P$;
         
        $e^* \xleftarrow{} \arg\max_{e \in Q \setminus P} \left(\hat\sigma(P) - \hat\sigma(P \cup \{e\}) \right)$;
        
        $P \xleftarrow{} P \cup \{e^*\}$ 
        }
    \Return $P$;
\end{algorithm}

\section{Experimental results}\label{sec4}
\subsection{Comparison with heuristic algorithms}
To evaluate the effectiveness of the proposed method, we conduct an experimental comparison with three baseline algorithms, where the sampling parameter for both models is set to $h=0.9$.
\begin{itemize}
\item \textbf{Random} \cite{Callaway2000}: Randomly removes an edge from the network.
\item \textbf{Max-degree} \cite{Albert2000}: Removes an edge adjacent to the node with the highest degree in each round.
\item \textbf{Greedy algorithm (EDA)} \cite{Yi2022}: Removes an edge in each round based on the objective function $\hat\sigma(.)$ in \eqref{Sigma mu}, as shown in Algorithm \ref{GA}.
\end{itemize}

\begin{figure}[H]
    \centering
    \begin{minipage}[t]{0.48\textwidth}
        \centering
        \includegraphics[width=\linewidth]{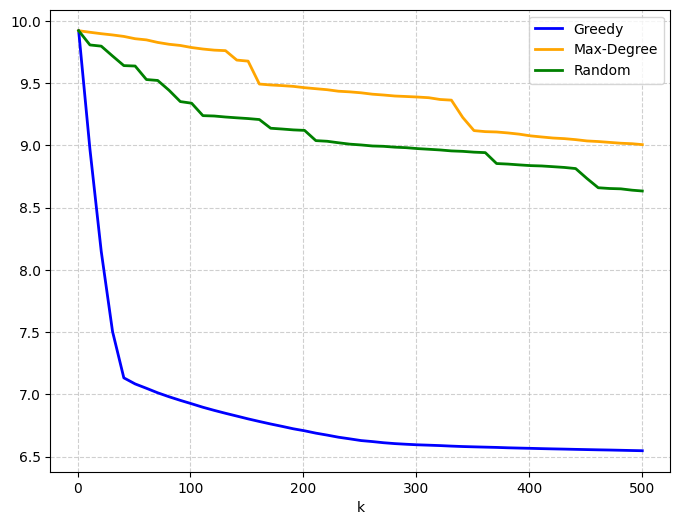}
        {(The Erdős–Rényi Network)}
        \label{er}
    \end{minipage}
    \hfill
    \begin{minipage}[t]{0.48\textwidth}
        \centering
        \includegraphics[width=\linewidth]{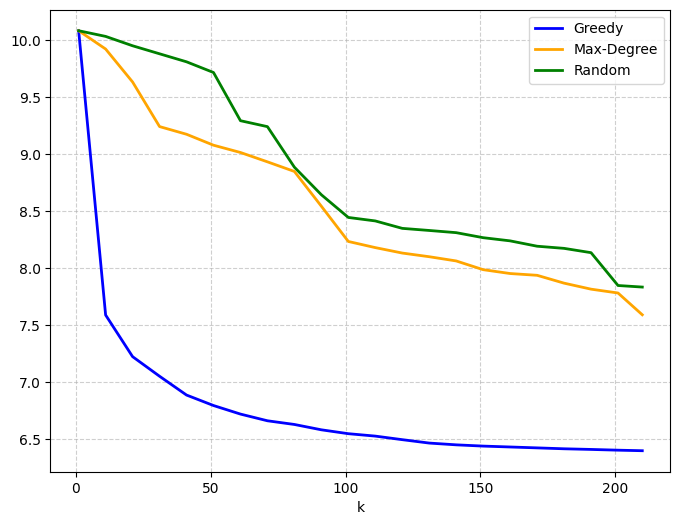}
        {(The Haslemere Network)}
        \label{hasle}
    \end{minipage}
    \caption{The number of infections after deleting edges on the ER and Haslemere networks}
    \label{exp-result}
\end{figure}

We performed simulations on an Erdős-Rényi model (ER for short) consisting of an undirected graph with $700$ nodes and a connection probability of $p = 0.023$. The generated graph $G=(V,E)$ with $5568$ edges. The healing rate for each node $i$ is randomly selected from the interval $D_i \in [0.5, 0.77]$, while the infection rate for each edge $(i, j) \in E$ is randomly distributed within $B_{ij} \in [0.017, 0.039]$. To initiate the propagation process, a seed set $S$ of $5$ nodes ($|S| = 5, S \subset V$) is randomly selected with initial infection states $x_u(0) \in [0.8, 0.94], \forall u \in S$. Conversely, the initial infection state for all remaining nodes is set to $x_u(0) = 0, \forall u \notin S$. To optimize the network structure for Problem \ref{problem}, a candidate edge set $Q$ is constructed by randomly selecting $|Q| = 2,500$ edges from the original edge set, with the objective of removing $k = 500$ edges.

In the Haslemere network model \cite{Klepac2018}, to be more realistic setting, we restrict the influence of nodes with excessively high degrees. A data preprocessing step was conducted to reduce the maximum node degree from $37$ to a threshold of at most $15$. Following this reduction, the network retains $1162$ edges. Subsequently, the dynamical parameters are configured similarly, with the node recovery rate set to $D_i \in [0.3, 0.58], \forall i \in V$ and the edge infection rate distributed within $B_{ij} \in [0.04, 0.049], \forall (i, j) \in E$. We also select a seed set of 5 nodes, where the initial state satisfies $x_u(0) \in [0.87, 0.98]$ for all $u \in S$. Finally, the candidate set $Q$ is formed by randomly selecting $520$ edges from the preprocessed edge set to evaluate the removal of $k = 210$ edges.

The simulation results presented in Figure \ref{exp-result} indicate that the experimental results derived from the upper bound function under the Edge Deletion Algorithm achieve the highest edge-cutting efficiency on both the ER and Haslemere contact networks, outperforming the other two algorithms. 
\subsection{Convergence Analysis}
\begin{figure}[H]
    \centering
    \begin{minipage}[t]{0.48\textwidth}
        \centering
        \includegraphics[width=\linewidth]{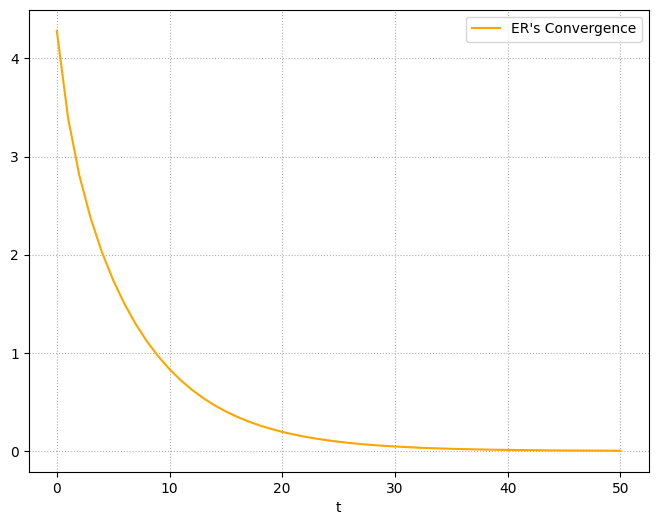}
        {(The Erdős–Rényi Network)}
        \label{er}
    \end{minipage}
    \hfill
    \begin{minipage}[t]{0.49\textwidth}
        \centering
        \includegraphics[width=\linewidth]{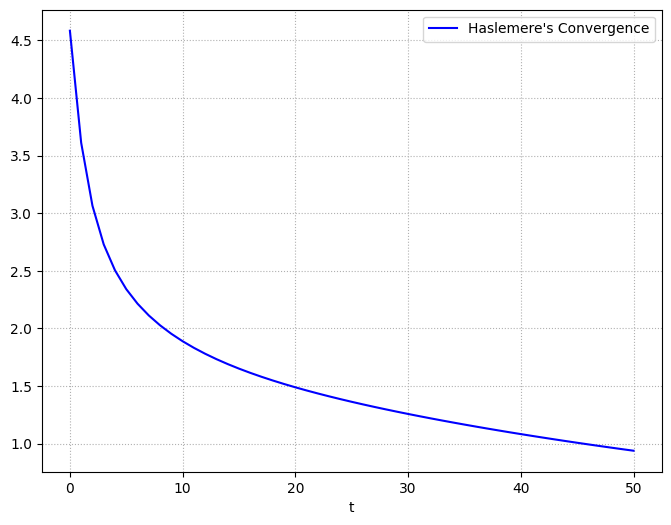}
        {(The Haslemere Network)}
        \label{hasle}
    \end{minipage}
    \caption{Convergence of infection states over time on The ER and Haslemere networks}
    \label{convergence_exp}
\end{figure}

We examine the convergence of infection states in the ER and Haslemere networks, as shown in Figure \ref{convergence_exp}. The results demonstrate that even without edge removal, the infection state consistently approaches the equilibrium $\hat{\mathbf{x}} = \mathbf{0}$ as $t \to \infty$, provided that the spectral norm of the state transition matrix for both models is strictly less than $1$.
\section{Conclusions}
We studied the deterministic SIS model defined on discrete-time, focusing on optimizing the epidemic size in complex networks under proposed conditions. By employing the spectral norm of the system matrix is less than $1$, the system converges to the healthy state. Furthermore, we demonstrated that under these conditions, the cumulative number of infections is upper-bounded by a monotone function. For optimal mitigation, the Edge Deletion Algorithm (Algorithm \ref{GA}) strategically removes links from selected candidate edges. To validate our theoretical findings, we evaluated the accuracy and effectiveness of the proposed algorithm through numerical simulations on both random graphs and real-world datasets. 

Future work can consider extending the SIS model to various graph topologies. Additionally, we would like to discover new algorithms capable of tracking more diverse and dynamic network propagation processes.


\section*{Appendix}

\begin{proof}{\bf\em of Theorem \ref{convergence}.}
Since the assumption, the spectral norm of the state transition matrix $\textbf{M}$ is less than $1$. It follows from Inequality \eqref{ineq1} that $\|\textbf{x}(t)\| \leq \|\textbf{M}\|\|\textbf{x}(t-1)\|$ for any time step $t$. By iterating this inequality over $t$ time steps, we obtain $$\|\textbf{x}(t)\| \leq (\rho(\textbf{M}))^t\|\textbf{x}(0)\|.$$ Consequently, $\|\textbf{x}(t)\|$ converges to $\hat{\textbf{x}} = \mathbf{0}$ as $t \to \infty$.
\end{proof}

\begin{proof}{\bf\em of Theorem \ref{upper bound}.}
By induction on the time steps for the Inequality \eqref{ineq1}, we obtain
    \begin{align*}
        \textbf{x}(t-1)&\leq \textbf{M}^{t-1}\textbf{x}(0) \\
        \textbf{x}(t-2)&\leq \textbf{M}^{t-2}\textbf{x}(0) \\
        &... \\
        \textbf{x}(1)&\leq \textbf{M}\textbf{x}(0).
    \end{align*}
    It follows that
    \begin{align}\label{proof2.3}
        \sum\limits_{l=0}^{t-1}\textbf{x}(l) \leq \left((\textbf{M}_{-P})^{t-1}+..+\textbf{M}_{-P}+\textbf{I}\right)\textbf{x}(0).
    \end{align}
Since the assumption $\rho(\textbf{M}) < 1$, we have, for instance \cite{Horn2012}. 
\begin{align*}
    \left((\textbf{M}_{-P})^{t-1}+..+\textbf{M}_{-P}+\textbf{I}\right)\textbf{x}(0) \approx (\textbf{I}-\textbf{M}_{-P})^{-1}\textbf{x}(0).
\end{align*} Furthermore, \eqref{ineq1.5} implies that $\textbf{x}(t+1) \leq \textbf{x}(t)+h\textbf{B}_{-P}\textbf{x}(t)$, which indicates 
\begin{align*}
    \textbf{x} (t) - \textbf{x} (t-1)&\leq h\textbf{B}_{-P}\textbf{x}(t-1)  \\
    \textbf{x} (t-1) - \textbf{x} (t-2)&\leq h\textbf{B}_{-P}\textbf{x}(t-2)  \\
    & ... \\
    \textbf{x}(1) - \textbf{x}(0)&\leq h\textbf{B}_{-P}\textbf{x}(0).
\end{align*}
Hence,
\begin{align}\label{proof2.1}
    \sum_{l=0}^{t-1}\|\textbf{x}(l)-x(l+1)\|_1&\leq\sum_{k=0}^{t-1}\|h\textbf{B}_{-P}\textbf{x}(k)\|_1.
\end{align}

Since the Assumption \ref{assp1}, we have $\textbf{x}(t)\geq0$ and $(\textbf{B}_{-P})_{ij}\geq0,\forall t\geq0,\forall (i,j)\in E$, therefore 
\begin{align}\label{proof2.2}
    \sum_{k=0}^{t-1}\|h\textbf{B}_{-P}\textbf{x}(k)\|_1 = \|h\textbf{B}_{-P}\sum_{k=0}^{t-1}\textbf{x}(k)\|_1
\end{align}
Combining \eqref{proof2.3}, \eqref{proof2.1} and \eqref{proof2.2}, we obtain that the cumulative number of infections at any time step $t$ is given by
\begin{align*}
    \|\textbf{x}(t)-\textbf{x}(0)\|_1 &\leq \sum_{l=0}^{t-1}\|\textbf{x}(l)-\textbf{x}(l+1)\|_1 \\
    &\leq \|h\textbf{B}_{-P}\sum_{k=0}^{t-1}\textbf{x}(k)\|_1 \\
    &\leq\|(\textbf{M}_{-P}+h\textbf{D}-\textbf{I})(\textbf{I}-\textbf{M}_{-P})^{-1}\textbf{x}(0)\|_1=\hat\sigma(P).
\end{align*}
The theorem is proved.
\end{proof}
\begin{proof}{\bf\em of Proposition \ref{supermodularity}.}
Let $P_1 \subseteq P_2 \subseteq Q$ be the sets of edges to be removed from the graph $G=(V,E)$ such that $\rho(\textbf{M}_{-P_1}) < 1$ and $\rho(\textbf{M}_{-P_2}) < 1$. Then, for any $e \in Q \setminus P_2$, it holds that 
$$(\textbf{I}-\textbf{M}_{-P_1})^{-1}-(\textbf{I}-\textbf{M}_{-P_1\cup\{e\}})^{-1}\geq(\textbf{I}-\textbf{M}_{-P_2})^{-1}-(\textbf{I}-\textbf{M}_{-P_2\cup\{e\}})^{-1}.$$

Using the Sherman-Morrison formula \cite{Bartlett1951}, we have
\begin{align*}
    (\textbf{I}-\textbf{M}_{-P} + ce_ie_j^T)^{-1} = (\textbf{I}-\textbf{M}_{-P})^{-1}-c\frac{(\textbf{I}-\textbf{M}_{-P})^{-1}e_ie_j^T(\textbf{I}-\textbf{M}_{-P})^{-1})}{1+ce_j^T(\textbf{I}-\textbf{M}_{-P})^{-1})e_i}
\end{align*}
in which $c=hB_{ij}$ is a nonnegative number. Take $(\textbf{I}-\textbf{M}_{-P_1\cup\{e\}})^{-1}=(\textbf{I}-\textbf{M}_{-P_1} + ce_ie_j^T)^{-1}$,
we transform into 
\begin{align*}
    (\textbf{I}-\textbf{M}_{-P_1})^{-1}-(\textbf{I}-\textbf{M}_{-P_1\cup\{e\}})^{-1}&=c\frac{(\textbf{I}-\textbf{M}_{-P_1})^{-1}e_ie_j^T(\textbf{I}-\textbf{M}_{-P_1})^{-1})}{1+ce_j^T(\textbf{I}-\textbf{M}_{-P_1})^{-1})e_i} \\
    &=\int_{0}^{1} c(\textbf{I}-\textbf{M}_{-P_1})^{-1}e_ie_j^T(\textbf{I}-\textbf{M'}_{-P_1})^{-1})t\,dt \\
    &\geq \int_{0}^{1} c(\textbf{I}-\textbf{M}_{-P_2})^{-1}e_ie_j^T(\textbf{I}-\textbf{M}_{-P_2})^{-1})t\,dt \\
    &=(\textbf{I}-\textbf{M}_{-P_2})^{-1}-(\textbf{I}-\textbf{M}_{-P_2\cup\{e\}})^{-1}
\end{align*}
Thus, we obtain
\begin{align*}
    (\textbf{I}-\textbf{M}_{-P_1})^{-1}-(\textbf{I}-\textbf{M}_{-P_1\cup\{e\}})^{-1}\geq(\textbf{I}-\textbf{M}_{-P_2})^{-1}-(\textbf{I}-\textbf{M}_{-P_2\cup\{e\}})^{-1}, e\in Q\setminus P_2.
\end{align*}
In addition, $\textbf{M}_{-P}+h\textbf{D}-\textbf{I}$ is an entrywise nonnegative and nonincreasing supermodular function of $P$. Therefore, the product of $(\mathbf{M}_{-P} + h\mathbf{D} - \mathbf{I})$ and $(\textbf{I}-\textbf{M}_{-P})^{-1}$ is entrywise monotone supermodular. It follows that the function $\hat{\sigma}(.)$ is non-increasing and supermodular.
\end{proof}

\end{document}